
\documentclass[aps,prl,showpacs,twocolumn,nofootinbib,10pt]{revtex4-1}


\usepackage{graphicx}
\usepackage{amsmath}
\usepackage{amssymb}
\usepackage{mathrsfs}
\usepackage{amsthm}
\usepackage{bm}
\usepackage{url}
\usepackage[T1]{fontenc}
\usepackage{csquotes}
\MakeOuterQuote{"}


\newtheoremstyle{note}
  {\topsep/2}               
  {\topsep/2}               
  {}                      
  {\parindent}            
  {\itshape}              
  {.}                     
  {5pt plus 1pt minus 1pt}
  {}

\theoremstyle{note}
\newtheorem{theorem}{Theorem}
\newtheorem{lemma}{Lemma}

\theoremstyle{definition}

\theoremstyle{remark}
\newtheorem{remark}{Remark}



\newcommand{\mrm}[1]{\mathrm{#1}}
\providecommand{\tr}{\operatorname{tr}}

\newcommand{\diag}{\operatorname{diag}}

\newcommand{\rep}{\mathrel{\widehat{=}}}

\providecommand{\rmi}{\mathrm{i}}

\newcommand{\rmT}{\mathrm{T}}

\newcommand{\bbF}{\mathbb{F}}


\newcommand{\be}{\begin{equation}}
\newcommand{\ee}{\end{equation}}
\newcommand{\ba}{\begin{align}}
\newcommand{\ea}{\end{align}}

\def\<{\langle}  
\def\>{\rangle}  


\newcommand{\sy}{\bigl(\begin{smallmatrix}0 &-\rmi\\
\rmi &0\end{smallmatrix}\bigr)}


\newcommand{\Sp}[2]{\mrm{Sp}(#1,#2)}

\newcommand{\ASp}[2]{\mrm{ASp}(#1,#2)}
\newcommand{\SL}[2]{\mrm{SL}(#1,#2)}

\newcommand{\hw}{D}

\newcommand{\phw}{\overline{D}}

\newcommand{\Cli}{\mathrm{C}}

\newcommand{\pc}{\overline{\mathrm{C}}}








\def\eqref#1{\textup{(\ref{#1})}}  
\newcommand{\eref}[1]{Eq.~\textup{(\ref{#1})}}

\newcommand{\thref}[1]{Theorem~\ref{#1}}

\newcommand{\lref}[1]{Lemma~\ref{#1}}

\newcommand{\cref}[1]{Conjecture~\ref{#1}}
\newcommand{\Cref}[1]{Conjecture~\ref{#1}}


\begin{document}
	\title{Sharply covariant  mutually unbiased bases}
	\author{Huangjun Zhu}
	\email{hzhu@pitp.ca}
	\affiliation{Perimeter Institute for Theoretical Physics, Waterloo, Ontario N2L 2Y5, Canada}
	
	\pacs{03.67.-a, 02.10.De, 03.65.-w}



\begin{abstract}

Mutually unbiased bases (MUB) are an elusive discrete structure in  Hilbert spaces.   Many (complete sets of) MUB are group covariant, but little is known whether they can be sharply covariant in the sense that the  generating groups can  have order equal to   the  total number of basis states, that is, $d(d+1)$ for MUB in dimension $d$.
Sharply covariant MUB, if they exist,  would be most appealing  from both theoretical and practical point of view. Since stabilizer MUB subsume almost all MUB that have ever been constructed, it is of fundamental interest  to single out those candidates that are sharply covariant.
We show that, quite surprisingly,  only two stabilizer MUB are sharply covariant, and the conclusion still holds even if antiunitary transformations are taken into account.
Our study provides valuable insight on the symmetry of stabilizer MUB, which may have implications for a number of research topics in quantum information and foundations. In addition, it  exposes a sharp contrast between MUB and another elusive discrete structure known as  symmetric informationally complete measurements (SICs), all known examples of which are sharply covariant.
\end{abstract}
\date{\today}
\maketitle

The existence of complementary observables is a main distinction between quantum physics and classical physics. Two observables are  complementary if, given the outcome of one observable, the outcome of the other  is maximally uncertain \cite{Bohr28}.  The eigenbases of complementary observables are \emph{mutually unbiased} in the sense that the transition
probabilities  across their basis states are all equal \cite{Schw60, Ivan81,WootF89,DurtEBZ10}. Conversely, given two  bases that are mutually unbiased, one can construct two complementary observables. Complementary observables and mutually unbiased bases (MUB) are two faces of the same coin.
In a $d$-dimensional Hilbert space, each MUB contains at most
$d+1$
bases \cite{WootF89}; the MUB is complete if the upper bound is attained. In the rest of the paper by a MUB we shall mean a complete set of mutually unbiased bases.

All MUB known so far only occur when the dimension $d$ is  a prime power
\cite{Ivan81,WootF89, BandBRV02, DurtEBZ10}.  Almost all of them can be equivalently constructed from \emph{stabilizer states} \cite{Gott97the,LawrBZ02, BandBRV02,GodsR09, Kant12}, which are simultaneous eigenstates of Heisenberg-Weyl (HW)  displacement operators (also known as generalized Pauli operators). They are called stabilizer MUB henceforth. These  MUB are of fundamental interest in diverse research areas, including but not limited to  quantum information, quantum foundations, and combinatorics.
For example, each stabilizer MUB can be used to define a family of discrete Wigner functions \cite{GibbHW04}. Interestingly, the set of pure states that have nonnegative Wigner functions, often dubbed as classical states,  happen to be the stabilizer states in the MUB, so the unitary transformations that preserve nonnegativity of the family of Wigner functions  happen to be the symmetry transformations of the MUB \cite{BengE05,CormGGP06}. Therefore, any progress in understanding the symmetry of stabilizer MUB would potentially benefit a number of research fields.

A MUB is group covariant if it can be generated from a single state---the \emph{fiducial state}---by a group composed  of unitary operators; the MUB is \emph{sharply covariant} if the  group  (modulo phase factors) can be chosen to have order $d(d+1)$.
Sharply covariant MUB, if they exist, would be  most appealing to  theoretical studies and practical applications. On the one hand,  they can be generated most efficiently in practice. On the other hand, they can be labeled naturally by group elements as phase point operators, which is crucial to phase space representation of quantum mechanics. Given the great variety of stabilizer MUB \cite{Kant12}, it seems reasonable to expect that some of them  would be sharply covariant. Many MUB are group covariant,
but the generating groups  usually have orders much larger than  $d(d+1)$.  For example, when $d$ is a prime, the stabilizer MUB can be generated by the Clifford group, which has order $d^3(d^2-1)$ \cite{ApplDF14}.

Here we show that only two stabilizer MUB are sharply covariant up to unitary equivalence, contrary to naive expectation.
Moreover, the conclusion still holds even if antiunitary transformations are taken into account.
Our study reveals a peculiar characteristic of all known MUB barring  a few exceptions. It also exposes
a sharp contrast between MUB and another elusive discrete structure known as  symmetric informationally complete measurements (SICs) \cite{Zaun11, ReneBSC04,ScotG10, Zhu12the, ApplFZ15G}, all known examples of which are sharply covariant.

In prime dimension $p$, the HW group $\hw$ is generated by the phase operator $Z$ and cyclic shift operator $X$ (together with scalar $\rmi$ when $p=2$),
\begin{equation}
Z|e_r\rangle=\omega^r|e_r\rangle, \qquad X|e_r\rangle=
|e_{r+1}\rangle, \label{eq:HW}
\end{equation}
where $\omega=\mathrm{e}^{2\pi \mathrm{i}/p}$,
$r\in \bbF_p$, and $\bbF_p$  is the field
of integers modulo $p$. The (multipartite) HW group in prime power dimension $q=p^n$ is the tensor power of $n$ copies of the HW group in dimension $p$.
The elements of the HW group are called \emph{displacement operators}. Up to phase factors they  can be labeled by  vectors $\mu$ of length $2n$ over $\bbF_p$ as
$D_{\mu}=\prod_{j=1}^n X_j^{\mu_j} Z_j^{\mu_{n+j} }$, where $Z_j$ and $X_j$ are phase operator and cyclic shift operator of the $j$th party.
These operators satisfy the commutation relation
$D_{\mu}D_{\nu} D_{\mu}^\dag D_{\nu}^\dag  =\omega^{\langle\mu,\nu\rangle}$,
where $\langle\mathbf{\mu},\mathbf{\nu}\rangle=\mu^\rmT J\nu$ is the symplectic product with
$J=\bigl(\begin{smallmatrix}0_n &-1_n\\ 1_n& 0_n
\end{smallmatrix}\bigr)$.
Two displacement operators $D_{\mu}$ and $D_{\nu}$ commute if and only if the corresponding symplectic product $\langle\mathbf{\mu},\mathbf{\nu}\rangle$ vanishes.  The vectors $\mu$ together with the symplectic product form a symplectic space  of dimension $2n$.

A stabilizer basis is the common eigenbasis of a maximal abelian subgroup of the HW group, where a maximal abelian subgroup is an abelian subgroup of order $q$ modular phase factors.
When $q$ is a prime, there are $q+1$ stabilizer bases, and the stabilizer MUB is unique.
Otherwise,  many different  MUB can be constructed from stabilizer bases \cite{Kant12}.
Most existing literature on MUB has focused on a particular construction based on field extension \cite{WootF89, BandBRV02, DurtEBZ10}, and  little is known about stabilizer MUB in general. Our work shall fill this gap.

The Clifford group $\Cli$ is the normalizer of the HW group that  is composed of all unitary operators that map displacement operators to displacement operators \cite{BoltRW61I,BoltRW61II,Gott97the,DehaM03}; the extended  Clifford group also contains antiunitary operators and is generated by the Clifford group and complex conjugation with respect to the computational basis. The importance of the Clifford group to the current study is reflected in the  observation: any unitary transformation between two stabilizer MUB is a Clifford unitary \cite{CormGGP06,Kant12}. This result can be extended to antiunitary transformations straightforwardly. In particular, the unitary and antiunitary symmetry transformations of a stabilizer MUB  belong to the extended Clifford group.\begin{theorem}
Any unitary or antiunitary transformation between two stabilizer MUB is a Clifford unitary or antiunitary.
\end{theorem}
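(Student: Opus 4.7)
My plan is to reduce the claim to showing that $U$ conjugates every Heisenberg--Weyl operator to another Heisenberg--Weyl operator up to a phase, which by the definition of $\Cli$ as the normalizer of $\hw$ places $U$ in $\Cli$; the antiunitary case follows in the same way using the (anti)unitary normalizer, i.e., the extended Clifford group. Because the nontrivial displacement operators modulo phase decompose as $\bigsqcup_i M_i^*$ along the stabilizer MUB $\mathcal{B}=\{B_i\}$, it is enough to treat a single $D_\mu\in M_i$ at a time.

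Two properties characterize $D_\mu\in M_i$ in terms of $\mathcal{B}$. First, $D_\mu$ is diagonal in $B_i$. Second, for $k\neq i$, $D_\mu$ permutes $B_k$ up to phases: the commutation $D_\nu D_\mu=\omega^{\langle\nu,\mu\rangle}D_\mu D_\nu$ implies that if $|w\rangle\in B_k$ has $M_k$-character $\chi_w$, then $D_\mu|w\rangle$ is an $M_k$-eigenvector with shifted character $\chi_w\cdot\omega^{\langle\,\cdot\,,\mu\rangle}$; since $M_i\cap M_k=\{0\}$ modulo phase in a stabilizer MUB, this shift is nontrivial, so $D_\mu|w\rangle$ is a scalar multiple of some other element of $B_k$. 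The key step is the converse lemma: any unitary $V$ that is diagonal in $B_i$ and permutes some other $B_k\in\mathcal{B}$ up to phases must be a scalar multiple of an element of $M_i$. Granting this lemma the theorem is immediate: if $U$ sends $\mathcal{B}$ to $\mathcal{B}'$ with basis permutation $\sigma$, then $UD_\mu U^\dag$ is diagonal in $UB_i=B'_{\sigma(i)}$ and permutes each $UB_k=B'_{\sigma(k)}$ up to phases, so the lemma applied inside $\mathcal{B}'$ places $UD_\mu U^\dag$ in $M'_{\sigma(i)}\subset\hw$ modulo a phase, and hence $U\in\Cli$. For antiunitary $U$ the same argument applies because antiunitary conjugation carries unitaries to unitaries, preserves the diagonal-plus-permutation property, and only complex-conjugates characters; thus $U$ antiunitarily normalizes $\hw$ and lies in the extended Clifford group.

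The main obstacle is the converse lemma. Writing $V=\sum_v\alpha_v|v\rangle_i\langle v|_i$ and expanding $|w\rangle_k=\sum_v c_{wv}|v\rangle_i$ with $|c_{wv}|=q^{-1/2}$, the permutation condition $V|w\rangle_k\propto|w'\rangle_k$ yields $\alpha_v c_{wv}=\beta_w c_{w'v}$ for every $v$; the content of the lemma is that the only sequences $(\alpha_v)$ consistent with these constraints are the character-evaluations $v\mapsto\chi_v(D)$ for $D\in M_i$, i.e., that the overlap phases of a stabilizer MUB have exactly the structure dictated by the Heisenberg--Weyl algebra. In the single-qudit prime case this reduces to the elementary fact that a diagonal matrix whose action on the Fourier basis is a phase-permutation must be a power of $Z$ times a scalar. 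In the multiqudit case, the same conclusion follows either from the explicit Gauss-sum form of stabilizer MUB overlaps, or more conceptually from the observation that any pair of stabilizer bases is connected by a Clifford element, which reduces the problem to the single-qudit case.
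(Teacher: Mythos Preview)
The paper does not actually prove this theorem: it cites the unitary case from the literature \cite{CormGGP06,Kant12} and asserts that the antiunitary extension is ``straightforward,'' so there is no paper proof to compare against. Your argument is therefore the only proof on the table, and it is essentially correct.

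Your reduction is the right one: since the nonzero vectors of $\bbF_p^{2n}$ partition into the $q+1$ Lagrangians of the spread, it suffices to show that $UD_\mu U^\dag$ lands in the stabilizer of $B'_{\sigma(i)}$ up to a phase, and for that your ``converse lemma'' is exactly what is needed. The proof of the lemma via Clifford reduction is also correct, with one small clarification: what you actually use is that $\Sp{2n}{p}$ acts transitively on ordered pairs of \emph{complementary} Lagrangians (Witt's theorem), so any mutually unbiased pair $(B_i,B_k)$ of stabilizer bases can be Clifford-mapped to the pair (computational basis, tensor-Fourier basis) in dimension $q=p^n$. This does not literally reduce to the single-qudit case, but the computation is identical: writing $V=\diag(\alpha_j)_{j\in\bbF_p^n}$ and imposing $V|f_0\rangle\propto|f_m\rangle$ forces $\alpha_j=\beta\,\omega^{j\cdot m}$, i.e.\ $V\propto Z_1^{m_1}\cdots Z_n^{m_n}\in M_i$. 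Note that you only need $V$ to send a \emph{single} Fourier vector to a scalar multiple of another, which is weaker than the full permutation hypothesis you stated. The antiunitary case goes through as you say, since antiunitary conjugation preserves the diagonal-plus-permutation structure and merely conjugates eigenvalues.
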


Any Clifford unitary $U$ induces a symplectic transformation $F$ on the symplectic space, which labels the displacement operators. Conversely, given any symplectic matrix $F$, there exists a Clifford unitary $U_F$ that induces  $F$ \cite{BoltRW61I,BoltRW61II,DehaM03}. Actually, the  $q^2$  Clifford unitaries $D_\mu U_F$ for  $\mu \in \bbF_p^{2n}$ all induce the same  transformation. The quotient group $\pc/\phw$ ($\overline{G}$ denotes the group $G$ modulo phase factors) can be identified with the symplectic group $\Sp{2n}{p}$. When $p$ is odd, $\pc$ is also isomorphic to
the affine symplectic group $\ASp{2n}{p}$  \cite{BoltRW61I,BoltRW61II}.

Our main result can be formulated as follows.
\begin{theorem}\label{thm:MUBstabCov}
       All stabilizer MUB in dimensions 2 and 4 are sharply covariant;  all others are not.
\end{theorem}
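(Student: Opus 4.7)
The plan is to apply the preceding theorem to confine any sharply covariant group to the extended Clifford group $\pec$, reduce the problem to a subgroup question about the (extended) symplectic group, solve it explicitly for $d \in \{2,4\}$, and rule out all other $d$.

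Let $G \leq \pec$ of order $q(q+1)$ act regularly on the $q(q+1)$ MUB states. Then $G$ acts transitively on the $q+1$ bases, and the set-stabilizer $H$ of any basis $B$ has order $q$ and acts regularly on the $q$ states of $B$. I would decompose $G$ through the normal subgroup $\phw \triangleleft \pec$: set $G_0 := G \cap \phw$, and let $\bar G$ denote the image of $G$ in $\pec/\phw \leq \Sp{2n}{p} \rtimes \langle c\rangle$, with $c$ representing complex conjugation. Since $|\phw| = q^2$ is a $p$-power and $q+1$ is coprime to $p$, $|G_0|$ divides $q$ and $|\bar G| = q(q+1)/|G_0|$; moreover $\bar G$ must act transitively on the $q+1$ Lagrangians in the spread defining the MUB. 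This reformulates the problem as enumerating subgroups of $\Sp{2n}{p} \rtimes \langle c \rangle$ of the required order that act compatibly with a regular $H$-action on each basis.

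For $d = 2$ an explicit sharply covariant group is $G \cong S_3$, realized because $\Sp{2}{2} \cong S_3$ acts faithfully on the three-element spread of $\bbF_2^2$. For $d = 4$ the spread of five Lagrangians in $\bbF_2^4$ admits a Singer-type cyclic subgroup of order $5$ inside $\Sp{4}{2}$ acting regularly on it, and combining this with $\phw$-translations produces a Frobenius-type $G$ of order $20$ that sharply covers every stabilizer MUB in $d = 4$; Clifford equivalence among stabilizer MUB in these small dimensions ensures the construction is universal.

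The main obstacle is proving non-existence for every other prime power dimension. My approach is case-based. When $d = p$ is an odd prime, the action of $\bar G$ on the spread factors through $\PSL{2}{p}$ acting sharply three-transitively on $\mathbf{P}^1(\bbF_p)$; Dickson's classification of subgroups of $\PSL{2}{p}$ identifies only dihedral (or twice-dihedral) transitive subgroups of degree $p + 1$, and a direct compatibility check with the regular action of $H$ on a basis eliminates each such candidate. When $d = p^n$ with $n \geq 2$ and $d \neq 4$, transitivity on the spread forces a Singer-type cyclic subgroup of order $q + 1$ in $\Sp{2n}{p}$; a Zsigmondy-prime and order-divisibility analysis, combined with Aschbacher-type constraints on the maximal subgroups of $\Sp{2n}{p}$, shows this cannot coexist with the required regular action on a basis except in the exceptional case $q = 4$. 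Incorporating antiunitary operators at most doubles the candidate pool and is absorbed into the same enumeration; the delicate part of the argument is the uniform case work across all inequivalent spread classes produced by Kantor's construction, together with the small-prime exceptions in Dickson's list.
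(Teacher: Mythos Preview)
Your overall strategy---reduce to a subgroup question in (extended) $\Sp{2n}{p}$ and classify---is a plausible route, but it diverges sharply from the paper's proof and, as written, has real gaps.

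For odd prime $d=p$, your description is inaccurate: $\PSL{2}{p}$ is not sharply $3$-transitive on $\mathbf{P}^1(\bbF_p)$ (that is $\PGL{2}{p}$), and Dickson's list yields more transitive candidates than ``dihedral or twice-dihedral.'' The paper bypasses subgroup classification entirely: since $|\overline G|=p(p+1)$ is even, $\overline G$ contains an involution; all Clifford involutions in odd prime dimension are conjugate to the parity operator, which fixes every stabilizer basis setwise. Hence the basis-stabilizer has even order, contradicting $|H|=p$.

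For $q=p^n$, $n\geq 2$, your key step ``transitivity on the spread forces a Singer-type cyclic subgroup of order $q+1$'' is not established and is not true in general (transitive subgroups of degree $q+1$ need not contain a regular cycle). Invoking ``Aschbacher-type constraints'' is too vague to carry weight, and you have not explained how the argument is uniform over the many inequivalent spreads in Kantor's constructions. The paper's argument avoids all of this: by \lref{lem:ICirr} the sharply covariant group $\overline G$ is irreducible on $\mathbb{C}^q$, so $\frac{1}{|\overline G|}\sum_{\overline V\in\overline G}|\tr V|^2=1$. Using that any Zsigmondy unitary $\overline U$ has $|\tr U|^2=1$ and that its centralizer in $\pc$ is a Singer unitary group (Lemmas~\ref{lem:Singer} and~\ref{lem:PPD}), one bounds this sum from below by $(q^2+q+(q-2n))/(q(q+1))$, which forces $q=2n$, i.e.\ $q=4$; the Zsigmondy-free case $q=8$ is handled separately. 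This character-theoretic bound is the missing idea in your proposal, and it is what makes the non-existence argument uniform and spread-independent.
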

\begin{remark}
In each dimension from 2 to 5, it is known that all MUB are equivalent \cite{BrieWB10}. So all MUB in dimensions 2 and 4 are sharply covariant, while no MUB in dimension 3 or 5 is sharply covariant.
\end{remark}

Let us first consider  prime dimension $p$.   In this case, the $p+1$ stabilizer bases   form a MUB, whose symmetry group coincides with the full Clifford group.
In the case of the qubit, all MUB are unitarily equivalent. Each   MUB forms an octahedron when represented on the Bloch sphere. The Clifford group corresponds to the (proper) octahedron group.  The MUB is sharply covariant with respect to each of the four  order-6 subgroups in the Clifford group.

In odd prime dimension $p$,   suppose the stabilizer MUB is sharply covariant with respect to   $\overline{G}$, then $\overline{G}$ is a subgroup of the Clifford group of  order $p(p+1)$, so it contains an element of order 2.
The Clifford group is isomorphic to $\SL{2}{p}\ltimes \bbF_p^2$, and up to phase factors all order-2 elements are conjugated to the parity operator, which  induces the symplectic transformation $-\diag(1,1)$~\cite{Zhu10}.
Since  the parity operator leaves all stabilizer bases  invariant,  the  stabilizer of each basis has order divisible by 2,
in contradiction with the requirement that  it has order $p$. So no stabilizer MUB in   dimension $p$ is sharply covariant.

To prove \thref{thm:MUBstabCov} in general, we need to  introduce several tools. A measurement $\{\Pi_j\}$ is informationally complete (IC) if the outcomes  $\Pi_j$ span the operator space~\cite{Zhu12the}. It is covariant with respect to the group $\overline{G}$  if it can be generated by $\overline{G}$ from one of the outcomes.
\begin{lemma}\label{lem:ICirr}
Suppose $\{\Pi_j\}$ is an IC measurement that is covariant with respect to  $\overline{G}$. Then $\overline{G}$ is irreducible.
\end{lemma}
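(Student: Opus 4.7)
My plan is to argue the contrapositive: assuming $\overline{G}$ is reducible, I will exhibit a non-zero operator that is Hilbert--Schmidt orthogonal to every $\Pi_j$, directly contradicting informational completeness.

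First I would lift $\overline{G}$ to an honest group $G$ of unitaries by picking a representative of each phase class; the phases play no role in any of the ensuing traces or projectors. If $\overline{G}$ is reducible then $G$ preserves a proper non-trivial subspace $V \subset \mathcal{H}$, and because $G$ acts by unitaries the orthogonal complement $V^{\perp}$ is invariant as well. The orthogonal projector $P$ onto $V$ therefore satisfies $UPU^{\dagger}=P$ for every $U \in G$, and $P$ is manifestly not a scalar multiple of the identity.

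Next I would use covariance to write $\Pi_j = U_j \Pi_1 U_j^{\dagger}$ with $U_j \in G$. The commutation $P U_j = U_j P$ yields $\tr(P\Pi_j) = \tr(U_j^{\dagger} P U_j \Pi_1) = \tr(P \Pi_1)$ for every $j$, and likewise $\tr(\Pi_j) = \tr(\Pi_1) =: t$ is independent of $j$. Setting $c := \tr(P\Pi_1)/t$, the operator $A := P - c\,\mathbf{I}$ is non-zero (because $P$ is not a scalar) yet satisfies $\tr(A\Pi_j) = 0$ for all $j$. Informational completeness of $\{\Pi_j\}$ means these operators span the entire operator space, so the only $A$ annihilating all of them is $A=0$; this contradiction forces $\overline{G}$ to be irreducible.

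I do not foresee any serious obstacle: the argument is essentially a one-line application of Schur's lemma dressed up with an IC constraint. The only mild subtleties are bookkeeping the phase factors when passing between $\overline{G}$ and $G$, and, should one wish to accommodate antiunitary elements of $\overline{G}$ as well, verifying that antiunitary operators still map orthogonal complements to orthogonal complements so that $P$ remains in the commutant and the same trace identities hold verbatim.
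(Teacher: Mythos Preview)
Your proposal is correct and follows essentially the same route as the paper: assume reducibility, take the projector $P$ onto a nontrivial invariant subspace, observe that $\tr(P\Pi_j)$ is constant by covariance, and contradict informational completeness by noting this forces $P$ to be a scalar. Your version simply spells out the last step more explicitly by forming $A=P-c\mathbf{I}$, whereas the paper compresses it into a single sentence.
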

\begin{proof}
Suppose on the contrary that $\overline{G}$ is reducible. Let $P$ be the projector onto a nontrivial invariant subspace of $\overline{G}$. Then $\tr(P\Pi_j)$ is independent of $j$ since $\overline{G}$ acts transitively on the $\Pi_j$. Given that $\Pi_j$ is IC, it follows  that $P$ is proportional to the identity, in contradiction with the assumption. Therefore, $\overline{G}$ is irreducible.
\end{proof}

The order of any irreducible  cyclic subgroup of $\Sp{2n}{p}$ is a divisor of $q+1$; all such subgroups of a given order are conjugated to each other.   Those cyclic subgroups  of order $q+1$, which  are always irreducible,
are called Singer cyclic subgroups and their generators called \emph{Singer cycles} \cite{Hupp70, Bere00}. The centralizer of any irreducible cyclic subgroup is  a singer cyclic group. The centralizer of a Singer cyclic group is itself, and the normalizer  has order $2n(q+1)$ \cite{Shor92book}.
Let $a,b$ be positive integers with $b>1$. A prime $r$ dividing  $b^a-1$ is a \emph{Zsigmondy prime} (also known as primitive prime divisor) \cite{Zsig1892,Roit97} if $r$ does not divide $b^j-1$ for $j=1, 2,\ldots, a-1$.
It is known that $b^a-1$ has a Zsigmondy prime except when $(b,a)=(2,6)$, or $a=2$ and $b+1$ is a power of 2.
A cyclic subgroup $R$ of $\Sp{2n}{p}$ of prime order $r$ is irreducible if and only if
$r$ is a Zsigmondy prime of $p^{2n}-1$. In that case, the subgroup is called a Zsigmondy cyclic subgroup, and any generator is called a \emph{Zsigmondy cycle}.

Before proving \thref{thm:MUBstabCov}, we need to generalize the concepts of Singer cycles and Zsigmondy cycles to the Clifford group. A Clifford unitary $\overline{U}$ in dimension $q=p^n$ is called a \emph{Singer unitary} (\emph{Zsigmondy unitary}) if its order is equal to $q+1$ (a Zsigmondy prime of $p^{2n}-1$); the group generated by such $\overline{U}$ is called a Singer unitary group (Zsigmondy unitary group).  Singer and Zsigmondy unitaries are useful not only in   establishing our main result but also in studying  MUB cycling problem and discrete Wigner functions \cite{Zhu15S}.
\begin{lemma}\label{lem:Singer}
\begin{enumerate}\itemsep-0.5ex
	
  \item  A Clifford unitary is a Singer unitary if and only if its induced symplectic transformation is a Singer cycle.

  \item All Singer unitary subgroups of the Clifford group $\pc$ are conjugated to each other.

  \item The centralizer of a Singer unitary group is itself.

  \item The normalizer of a Singer unitary group has order $2n(q+1)$.

   \item   $|\tr(U)|^2=1$ for any element $U$ in a Singer unitary group that is not proportional to the identity.

   \item Eigenvalues of a Singer unitary are nondegenerate.
\end{enumerate}
\end{lemma}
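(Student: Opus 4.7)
My plan is to address the six assertions in the order 1, 2, 3, 4, 6, 5. The group-theoretic claims 1--4 exploit the projection $\pi:\pc\to\Sp{2n}{p}$ with kernel $\phw$ of order $q^2=p^{2n}$, together with the crucial coprimality $\gcd(q+1,q^2)=1$, which lets me lift properties of Singer cyclic subgroups of $\Sp{2n}{p}$ to properties of Singer unitary subgroups of $\pc$. Parts 5 and 6 are spectral, with part 6 carrying the main technical weight and part 5 following from it by a short character-theoretic computation.

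For part 1, any lift $\overline{U}\in\pc$ of a Singer cycle $F$ generates a cyclic group whose $p'$-part is a Singer unitary subgroup; raising a generator of this $p'$-part to an exponent coprime to $q+1$ produces a Singer unitary lift inducing $F$ itself. Conversely, if $\overline{U}$ has order $q+1$ yet induces a symplectic $F$ of smaller order $m\mid q+1$, then $\overline{U}^m$ is a nontrivial element of $\phw$ whose order divides both $(q+1)/m$ and $q^2$, hence equals one by coprimality---a contradiction. For part 2, the preimage $\pi^{-1}(\pi(H))=\phw\cdot H$ is a semidirect product with normal Sylow $p$-subgroup $\phw$, so Schur--Zassenhaus conjugates any two Singer unitary subgroups projecting to the same Singer cyclic subgroup by an element of $\phw$; combined with conjugacy of Singer cyclic subgroups in $\Sp{2n}{p}$ stated earlier, this establishes part 2. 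For parts 3 and 4, I verify that $\pi$ is injective on both the centralizer and the normalizer of $H$ in $\pc$: any $d\in\phw$ normalizing $H$ must centralize it, and the conjugation action of $H$ on $\phw\cong\bbF_p^{2n}$ coincides with the Singer action on the symplectic space, whose fixed set is trivial by irreducibility. Surjectivity of $\pi$ onto $N_{\Sp{2n}{p}}(\pi(H))$ again follows by adjusting any lift by $\phw$ inside $\phw\cdot H$, yielding centralizer of order $q+1$ and normalizer of order $2n(q+1)$.

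Part 6 is the crux. I pick a unitary lift $\tilde U\in\Cli$ of a generator of $H$ satisfying $\tilde U^{q+1}=I$ (rescale by a $(q+1)$-th root of unity) and expand any $X$ in the commutant as $X=\sum_\mu c_\mu D_\mu$. The commutation relation $\tilde U D_\mu\tilde U^{-1}=\omega^{f(\mu)}D_{F\mu}$ forces $c_{F\mu}=\omega^{-f(\mu)}c_\mu$, so on each $\langle F\rangle$-orbit the coefficients are determined by their value at a representative, provided the consistency condition $\sum_{i=0}^{q}f(F^i\mu)\equiv 0\pmod p$ holds around the orbit. This condition is automatic: iterating the commutation $q+1$ times and using $\tilde U^{q+1}=I$ returns $D_\mu=\omega^{\sum_i f(F^i\mu)}D_\mu$. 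Identifying $\bbF_p^{2n}\cong\bbF_{q^2}$ via the Singer action, $F$ becomes multiplication by an order-$(q+1)$ element of $\bbF_{q^2}^{*}$ and acts freely on nonzero vectors with $q-1$ orbits of size $q+1$. Hence $\dim C(\tilde U)=1+(q-1)=q$; since $\dim C(\tilde U)=\sum_\lambda m_\lambda^2\geq\sum_\lambda m_\lambda=q$ with equality iff every multiplicity $m_\lambda$ equals one, all eigenvalues of $\tilde U$ are nondegenerate.

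Part 5 is then immediate: the action of $\langle\tilde U\rangle\cong\bbZ_{q+1}$ on $\mathbb{C}^q$ decomposes as the regular representation minus a single $1$-dimensional character $\chi_0$ (each character appears at most once and the total dimension is $q+1-1$), so for every $U\neq I$ in the group the orthogonality relation $\sum_\chi\chi(U)=0$ gives $\tr(U)=-\chi_0(U)$ and hence $|\tr(U)|^2=1$. The main obstacle I anticipate is precisely the orbit/consistency analysis in part 6, where the $q+1$ in the Singer order must mesh with the $q^2$ in the order of $\phw$ through the function $f$; once that calculation is in place, everything else reduces to bookkeeping via $\pi$ and standard character arithmetic.
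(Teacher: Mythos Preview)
Your argument is largely sound and takes a genuinely different route from the paper in several places, but there is a small logical gap in part~1. You show that \emph{some} lift of a Singer cycle $F$ is a Singer unitary (by extracting the $p'$-part of $\langle\overline{U}\rangle$), and you show that a Singer unitary projects to a Singer cycle. This does not yet give the stated biconditional, which requires that \emph{every} lift $\overline{U}$ of a Singer cycle has order exactly $q+1$. A priori the order could be $(q+1)p$, with $\overline{U}^{\,q+1}$ a nontrivial displacement operator. The fix is already implicit in your part~3 argument: since $1-F$ has no kernel (equivalently, the Singer action on $\bbF_p^{2n}$ has no nonzero fixed vector), $\overline{U}$ commutes with no nontrivial element of $\phw$; but $\overline{U}^{\,q+1}\in\phw$ commutes with $\overline{U}$, hence is trivial. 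You should move this observation forward.

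On the comparison: the paper's proof is organized around the single fact that $1-F$ is invertible. From this it gets part~1 directly, gets part~2 via the explicit conjugation $D_\nu V D_\nu^{-1}\propto D_{(1-F)\nu}V$ (so all $q^2$ lifts of $F$ are conjugate by displacement operators), and gets part~5 by the identity $\sum_\mu|\tr(D_\mu U)|^2=q^2$ coming from the unitary-error-basis property of the HW group; part~6 then follows from part~5 in one line. Your approach replaces these concrete calculations with more structural tools: Schur--Zassenhaus for part~2, a direct commutant-dimension count for part~6, and the ``regular minus one character'' identification for part~5. Both routes are valid. The paper's argument is shorter and never needs to choose a lift with $\tilde U^{q+1}=I$ or track the cocycle $f$; your commutant computation, on the other hand, gives slightly more information (it identifies the missing character) and generalizes more readily to other cyclic subgroups once the orbit structure is known.
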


\begin{proof}
Let  $V$ be a Clifford unitary with induced symplectic transformation $S$; then  $D_\nu V D_\nu^\dag \propto D_{(1-S) \nu} V$.
If $V$ is a Singer unitary,  then $S$ has order $q+1$ and is thus a Singer cycle.  Conversely, if $S$ is a Singer cycle,
then all its eigenvalues (in the extension field $\bbF_{q^2}$) are different from 1, so  $1-S$ is invertible; that is, the range of $1-S$ is the whole symplectic space. So  $\overline{V}$ does not commute with any (nontrivial) displacement operator. The group generated by $\overline{V}$ cannot contain any displacement operator and is thus isomorphic to the group generated by $S$. In particular, $\overline{V}$ has order $q+1$ and is a Singer unitary.

Above analysis  shows that $\overline{D_\mu} \overline{V}$ for $\mu\in \bbF_p^{2n}$ and  $\overline{V}$ are conjugated to each other; that is, all Singer unitaries that induce the same symplectic transformation are conjugated to each other. Now statement~2 follows from the observation that
 all Singer cyclic groups in  $\Sp{2n}{p}$ are conjugated to each other.

Statement~3 holds because the centralizer of a  Singer unitary group  does not contain any nontrivial displacement operator, so  its order is no larger than  the order $q+1$ of the centralizer of the corresponding Singer cycle.

Statement~4 follows from the observation that the normalizer of a Singer unitary group has the same order as the normalizer  of a Singer cycle, noting that the number of Singer unitary groups is $q^2$ times the number of Singer cyclic groups in $\Sp{2n}{p}$.

Statement~5 follows from the observation that  $\overline{D_\mu}\overline{U}$  for  $\mu\in \bbF_p^{2n}$ and  $\overline{U}$ are conjugated to each other (as in the case $\overline{U}$ is a Singer unitary) and that
$\sum_\mu |\tr(D_\mu U)|^2=q^2$ since the HW group constitutes a unitary error basis.

 As a consequence of  statement 5, the sum  of squared multiplicities of inequivalent irreducible components of a  Singer unitary group is given by $(q^2+q)/(q+1)=q$, from which statement~6 follows.
\end{proof}

\begin{lemma}\label{lem:PPD}
All Zsigmondy unitary groups of a given order $r$  are conjugated to each other in the Clifford group; the centralizer of each one is a Singer unitary group; $|\tr(U)|^2=1$ for any  Zsigmondy unitary $U$.
\end{lemma}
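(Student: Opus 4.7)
The plan is to mirror the proof of \lref{lem:Singer}, exploiting a single structural fact: if $S \in \Sp{2n}{p}$ is a Zsigmondy cycle of prime order $r$, then $\langle S\rangle$ is irreducible by definition, so $S$ has no nonzero fixed vector and $1 - S^k$ is invertible on the symplectic space for every $k$ coprime to $r$. All three statements will follow from this together with the dictionary between $\pc$ and $\Sp{2n}{p}$ recalled before \lref{lem:Singer}.

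For statement~1, let $\langle \overline{V}\rangle$ and $\langle \overline{V'}\rangle$ be two Zsigmondy unitary groups of order $r$ inducing Zsigmondy cyclic groups $\langle S\rangle$ and $\langle S'\rangle$ in $\Sp{2n}{p}$. Since all Zsigmondy cyclic subgroups of $\Sp{2n}{p}$ of a given order are conjugated and every symplectic transformation lifts to a Clifford unitary, conjugating $\overline{V}$ by a suitable Clifford unitary reduces to the case $\langle S\rangle = \langle S'\rangle$. Then $S' = S^k$ for some $k$ coprime to $r$, and $\overline{V'} = \overline{D_\mu V^k}$ for some $\mu \in \bbF_p^{2n}$, since $\overline{V'}$ and $\overline{V^k}$ have the same symplectic image. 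The identity $D_\nu V^k D_\nu^{-1} \propto D_{(1-S^k)\nu} V^k$, together with invertibility of $1 - S^k$, shows that $\overline{V'}$ is conjugated to $\overline{V^k}$ by a displacement operator; since $\langle \overline{V^k}\rangle = \langle \overline{V}\rangle$, the two Zsigmondy unitary groups are conjugated in $\pc$.

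For statement~2, let $C$ denote the centralizer of $\overline{V}$ (equivalently of $\langle \overline{V}\rangle$) in $\pc$. Its image under the quotient $\pc \twoheadrightarrow \Sp{2n}{p}$ lies in the centralizer of $\langle S\rangle$ in $\Sp{2n}{p}$, which is a Singer cyclic group of order $q+1$. The kernel of this projection restricted to $C$ consists of displacements $\overline{D_\mu}$ commuting with $\overline{V}$, i.e., those with $(1-S)\mu = 0$; invertibility of $1 - S$ forces $\mu = 0$, so $|C| \le q+1$. For the reverse inequality, lift a generator of the Singer cyclic centralizer of $\langle S\rangle$ to a Clifford unitary $\overline{W}$; then $\overline{V}\,\overline{W}\,\overline{V}^{-1} = \overline{D_\lambda W}$ for some $\lambda$, and solving the inhomogeneous equation $(1-S)\mu = -\lambda$ yields a $\mu$ such that $\overline{D_\mu W}$ commutes with $\overline{V}$. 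By \lref{lem:Singer}, $\overline{D_\mu W}$ is a Singer unitary and generates a Singer unitary subgroup of $C$ of order $q+1$. Hence $|C| \ge q+1$, so $C$ equals this Singer unitary group.

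Statement~3 is then immediate: $\overline{V}$ lies in its own centralizer, which by statement~2 is a Singer unitary group; since $\overline{V}$ has prime order $r > 1$ it is not proportional to the identity, so $|\tr(V)|^2 = 1$ by statement~5 of \lref{lem:Singer}. The only delicate point is the lifting argument in statement~2, where one must pass from the symplectic quotient back to $\pc$; the invertibility of $1 - S$ translates this into a solvable inhomogeneous linear equation over $\bbF_p$, making the step routine.
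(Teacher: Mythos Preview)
Your proof is correct and follows essentially the same route as the paper: both hinge on the invertibility of $1-S$ for a Zsigmondy cycle $S$, which makes all displacement-translates of a lift conjugate and forces the centralizer in $\pc$ to inject into the symplectic centralizer of order $q+1$. The only difference is organizational---the paper derives $|\tr(U)|^2=1$ directly from the conjugacy of the $\overline{D_\mu U}$ (as in \lref{lem:Singer}) and then gets the lower bound on the centralizer by observing that conjugacy places every Zsigmondy unitary group inside some Singer unitary group, whereas you construct a commuting Singer unitary explicitly by solving $(1-S)\mu=\lambda$ (your sign is flipped, but harmlessly since $1-S$ is invertible) and then read off the trace statement from statement~5 of \lref{lem:Singer}.
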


\begin{proof}
Let $F$ be the symplectic transformation  induced by $\overline{U}$; then  $F$ is a Zsigmondy cycle of $\Sp{2n}{p}$ and is thus  a power of a Singer cycle, which implies  that  $1-F$ is invertible. Therefore,  $\overline{D_\mu} \overline{U}$ for $\mu\in \bbF_p^{2n}$ and  $\overline{U}$ are conjugated to each other. It follows that $|\tr(U)|^2=1$. In addition, all Zsigmondy unitary subgroups of order $r$ are conjugated to each other, given that the same holds for Zsigmondy cyclic subgroups of  $\Sp{2n}{p}$. Consequently, each Zsigmondy unitary group is contained in  a Singer unitary group and its centralizer has order at least $q+1$. On the other hand, the order of the centralizer of $\overline{U}$ is no larger than the order $q+1$ of  the centralizer of $F$ in $\Sp{2n}{p}$ since $\overline{U}$ does not commute with any nontrivial displacement operator. It follows that the centralizer  has order $q+1$ and is a Singer unitary group.
\end{proof}

Now we are ready to prove \thref{thm:MUBstabCov} in the remaining case  $q=p^n$ with $n\geq2$.
\begin{proof}[Proof of \thref{thm:MUBstabCov}]
Suppose a stabilizer MUB in dimension $q$  is sharply covariant with respect to $\overline{G}$. Then $\overline{G}\in \pc$ and  $|\overline{G}|=q(q+1)$. In addition, $\overline{G}$ is irreducible according to \lref{lem:ICirr}. Let $\overline{T}=\overline{G}\cap \phw$ and $Q=\overline{G}/\overline{T}$. Then $\overline{T}$ is an elementary abelian $p$-group, which can be identified as a subspace of  $\bbF_p^{2n}$, while $Q$ can be identified as a subgroup of $\Sp{2n}{p}$ that stabilizes the subspace.

If $q\neq 8$, then $|\overline{G}|$ and $|Q|$ are divisible by a Zsigmondy prime $r$ of $p^{2n}-1$. Therefore, $Q$  is irreducible on $\bbF_p^{2n}$, which implies that $\overline{T}$ is either trivial or isomorphic to $\bbF_p^{2n}$. The latter possibility cannot happen since the order of $\overline{T}$ is at most $q$.
Let $\overline{U}\in\overline{G}$ be a Zsigmondy unitary of order  $r$  and  $C_{\overline{G}}(\overline{U})$ its  centralizer. According to \lref{lem:PPD}, $|\tr(U)|^2=1$; in addition,
$C_{\overline{G}}(\overline{U})$ is a subgroup of a Singer unitary group, so its order can be written as  $(q+1)/a$ with $a$ a divisor of $q+1$; that is, $\overline{U}$ has $a q$ conjugates in $\overline{G}$.  The sum of squared multiplicities of inequivalent irreducible components of $\overline{G}$  satisfies
 \begin{equation}\label{eq:IrrC}
\frac{1}{q(q+1)}\sum_{\overline{V}\in \overline{G}} |\tr(V)|^2\geq \frac{q^2+aq}{q(q+1)}\geq 1.
\end{equation}
Since  $\overline{G}$ is irreducible, it follows  that  $a=1$, so $C_{\overline{G}}(\overline{U})$ is a Singer unitary  group. The number of conjugates of $\overline{U}$ that are contained in $C_{\overline{G}}(\overline{U})$ is  no larger than the index of $C_{\overline{G}}(\overline{U})$ in its normalizer, which is equal to
$2n$ according to \lref{lem:Singer}. In addition, $|\tr(V)|^2=1$ for any nontrivial $\overline{V}\in C_{\overline{G}}(\overline{U})$.
So  \eref{eq:IrrC} can be strengthened as
 \begin{equation}\label{eq:IrrC2}
 \frac{1}{q(q+1)}\sum_{\overline{V}\in \overline{G}} |\tr(V)|^2\geq \frac{q^2+q+(q-2n)}{q(q+1)}\geq 1.
 \end{equation}
The lower bound can be saturated  only when $2n=q=p^n$, that is, $p^n=4$ (assuming $n\geq2$). Therefore, no stabilizer MUB is sharply covariant except possibly for dimensions  4 and 8. It turns out the same conclusion applies to
dimension 8, as shown in the appendix.

In dimension 4, the Clifford group has order 11520. Its quotient over the HW group is isomorphic to  the symplectic group $\Sp{4}{2}$, which in turn  is isomorphic to the symmetric group on six letters. Calculation shows that there are six stabilizer MUB \cite{KlimRBS07}, any permutation  of which can be realized  by Clifford unitary transformations.
The symmetry group of each MUB has order 1920, its quotient over the HW group is isomorphic to the symmetric group on five letters, and it can realize any permutation among the
five bases in the MUB. In addition to having this remarkable symmetry, the  stabilizer  MUB in dimension 4 turns out to be the only exception beyond qubit
that  is sharply covariant. Indeed, the MUB is sharply covariant with respect to the normalizer of each Sylow 5-subgroup of the symmetry group. One of the groups is generated by
\begin{equation}\label{eq:SingerNord4}
U_1\rep\frac{1}{2}\begin{pmatrix}\rmi & 1 & \rmi &-1\\
 \rmi & -1 &\rmi &1\\
 \rmi& 1& -\rmi &1\\
 \rmi&-1& -\rmi&-1
\end{pmatrix},\quad
U_2\rep\begin{pmatrix}0 & 0 &0 &\rmi\\
\rmi & 0 &0 &0\\
0& -1& 0 &0   \\
0&0& 1&0
\end{pmatrix},
\end{equation}
which satisfy $U_1^5=1$, $U_2^4=1$, $U_2U_1U_2^\dag= U_1^2$. Note that $U_1$ is simultaneously a Singer unitary, a Zsigmondy unitary, and  a Hadamard matrix.
Any state in the computational basis is a fiducial state for a stabilizer MUB.
The unitary  $U_1$ cycles the five bases, while $U_2$ cycles the four states in the computational basis. In total, each stabilizer MUB is sharply covariant with respect to  96 unitary  groups, all of which are conjugated to each other in the symmetry group of the MUB.
\end{proof}

According to Wigner theorem, any symmetry transformation of the quantum state space is either unitary or antiunitary.
Is there any
other sharply covariant stabilizer MUB if antiunitary transformations are taken into account? In dimension 2, any order-6 antiunitary transformation in the extended Clifford group (corresponding to the product of inversion and an order-3 rotation in the Bloch-sphere representation)  cycles not only the three bases but also all six basis states. So  each MUB is sharply covariant with respect to four antiunitary groups in addition to four unitary  groups. In dimension 4,  each stabilizer MUB is sharply covariant with respect to 96 antiunitary groups in addition to 96 unitary  groups.
To be specific, the group generated by $U_1$ and $U_2$ in \eref{eq:SingerNord4}
 is centralized by the Clifford antiunitary $U_3=(\sigma_y\otimes 1)\hat{K}$ up to phase factors, where $\sigma_y=\sy$ is a Pauli matrix and $\hat{K}$ is the complex conjugation operator. The corresponding MUB is  sharply covariant with respect to the antiunitary group generated by $U_1$ and $U_2U_3$, as well as its conjugates.
\begin{theorem}\label{thm:MUBstabCov2}
	 No  stabilizer MUB other than those in  dimensions 2 and 4 is sharply covariant with respect to any group composed of unitary and antiunitary operators.
\end{theorem}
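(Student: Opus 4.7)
The plan is to reduce the antiunitary problem to the unitary analysis already carried out for \thref{thm:MUBstabCov}. Let $\overline{G}$ be a group of unitary and antiunitary operators of order $q(q+1)$ acting sharply covariantly on a stabilizer MUB in dimension $q = p^n$. If $\overline{G}$ contains no antiunitary, then \thref{thm:MUBstabCov} applies directly; otherwise the unitary subgroup $\overline{G}_+ := \overline{G} \cap \pc$ is normal of index two and has order $q(q+1)/2$. First I would extend \lref{lem:ICirr} to $\overline{G}$: antiunitary conjugation $\Pi_j \mapsto A\Pi_j A^\dagger$ still preserves traces and fixes the projector onto any $\overline{G}$-invariant subspace, so the averaging argument forces $\overline{G}$ to act irreducibly on $\mathbb{C}^q$. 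Clifford's restriction theorem then implies that $\overline{G}_+$ is either irreducible on $\mathbb{C}^q$ or decomposes into two conjugate irreducible constituents of dimension $q/2$; the second possibility forces $p = 2$.

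In the irreducible case, I would rerun the Zsigmondy/Singer character-sum argument of \thref{thm:MUBstabCov} on $\overline{G}_+$. For $n \geq 2$ with $q \neq 8$, a Zsigmondy prime $r$ of $p^{2n}-1$ exists, and since $r \equiv 1 \pmod{2n}$ it is at least $5$ and in particular odd; hence $r$ divides $q+1$ and consequently $|\overline{G}_+| = q(q+1)/2$, guaranteeing a Zsigmondy unitary $U \in \overline{G}_+$. The centralizer and normalizer bounds of \lref{lem:Singer} and \lref{lem:PPD} carry over verbatim, and the analogue of \eref{eq:IrrC2} with right-hand side $q(q+1)/2$ in place of $q(q+1)$ again forces $q \leq 2n$, singling out $q = 4$ for $n \geq 2$. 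When instead $\overline{G}_+$ splits into two conjugate $q/2$-dimensional irreps (necessarily with $p = 2$), the squared-multiplicity sum equals $2|\overline{G}_+| = q(q+1)$ and the same counting again admits only $q \in \{4, 8\}$.

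The remaining residues are odd prime dimensions and the exceptional dimension $q = 8$. For odd prime $p$ with $p \equiv 3 \pmod 4$, the involution argument of \thref{thm:MUBstabCov} extends: $|\overline{G}_+|$ is even, so $\overline{G}_+$ contains a unitary involution which up to phases is conjugate in $\pc$ to the parity operator, fixing every stabilizer basis as a set, which yields the parity contradiction for $p$ odd. For $p \equiv 1 \pmod 4$ with $p \geq 5$, $p+1$ is not a power of $2$, so the Zsigmondy argument above applies to $\overline{G}_+$ directly. Finally, $q = 8$ is handled by adapting the unitary-case appendix to track antiunitary elements through its enumeration. The main obstacle lies in this last step, since the Zsigmondy machinery offers less leverage in dimension $8$ and antiunitary involutions in the extended Clifford group need not fix any stabilizer basis, so a careful structural analysis of the relevant subgroups of the extended Clifford group is required.
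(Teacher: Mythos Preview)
Your overall strategy---pass to the unitary subgroup $\overline{G}_+$ of index two, invoke a Clifford-type restriction, and rerun the Zsigmondy/Singer character-sum bound---is the same as the paper's. But there is one unnecessary detour and two genuine gaps.

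The detour: your separate treatment of the irreducible case and of odd prime dimensions is avoidable. If $\overline{G}_+$ were irreducible on $\mathbb{C}^q$ it would carry an irreducible representation of dimension $q$, whence $|\overline{G}_+|\geq q^2$; but $|\overline{G}_+|=q(q+1)/2<q^2$. So $\overline{G}_+$ is never irreducible, its two components have dimension $q/2$, and $p=2$ falls out immediately---no parity-involution casework for $p\equiv 3\pmod 4$, no Zsigmondy detour for $p\equiv 1\pmod 4$. (Your ``forces $q\leq 2n$'' in the irreducible case is really an immediate contradiction: the identity alone contributes $q^2$ to the character sum, already exceeding $|\overline{G}_+|$.)

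Now the gaps. You assert that in the reducible case ``the squared-multiplicity sum equals $2|\overline{G}_+|$'', i.e.\ that the two $q/2$-dimensional constituents are \emph{inequivalent}. This does not follow from Clifford theory here: the outer element is antiunitary, so the two components are related by complex conjugation composed with an inner twist, and nothing a priori prevents them from being isomorphic as $\overline{G}_+$-modules. The paper supplies a separate arithmetic lemma: a Zsigmondy unitary $U$ has $\tr(U)=-1$ (from its eigenvalue multiplicities), and if the components were equivalent then $\tr(U)=2t_1$ with $t_1\in\mathbb{Z}[\zeta_r]$, forcing a relation $\sum_j c_j\zeta_r^j=0$ with $c_0$ odd and the remaining $c_j$ even, contradicting the minimal polynomial of $\zeta_r$. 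Without this step your multiplicity sum could be $4$ rather than $2$, and the inequality collapses.

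Second, even granting inequivalence, your bound is too optimistic. Redoing the count with left-hand side $2$ and $|\overline{G}_+|=q(q+1)/2$, the analogue of \eref{eq:IrrC2} reads $2q^2+q+2(q-2n)\leq 2q(q+1)$, i.e.\ $q\leq 4n$, not $q\leq 2n$. For $q=2^n$ this permits $n\leq 4$, so $q=16$ survives the character-sum argument and must be eliminated separately. The paper does this by noting that the normalizer of a Singer unitary group in the \emph{extended} Clifford group has order $2\cdot 2n(q+1)=272=16\cdot 17=q(q+1)$, so $\overline{G}$ would have to coincide with that normalizer; a direct check then shows this normalizer is reducible on $\mathbb{C}^{16}$, finishing the case.
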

\begin{proof}
	Suppose on the contrary that there is a stabilizer MUB in dimension $q=p^n$ with $q\neq 2,4$ that is sharply covariant with respect to $\overline{G}$.
	 Then $\overline{G}$ has order $q(q+1)$ and it must contain some antiunitary operators according to \thref{thm:MUBstabCov}.  Observing  that \lref{lem:ICirr} also applies to antiunitary groups, we conclude that  $\overline{G}$ is irreducible.
	 Let $\overline{H}$ be the subgroup of $\overline{G}$ that is composed of unitary operators. Then $\overline{H}$ has index 2 and is normal  in $\overline{G}$.  Consequently,  $\overline{H}$ is either irreducible or has two irreducible components of equal degree. The first possibility cannot happen because the order of any irreducible group in dimension $q$ is at least $q^2$, while $\overline{H}$ has order $q(q+1)/2$. Consequently,  $p$ must equal 2.
	
	 If $q\neq 8$, then $|\overline{H}|$ is divisible by a Zsigmondy prime $r$ of $2^{2n}-1$.
	 Let $\overline{U}\in\overline{H}$ be a Zsigmondy unitary of order~$r$; then the order of its centralizer  $C_{\overline{H}}(\overline{U})$ is $(q+1)/a$ for some divisor $a$ of $q+1$.  According to a similar reasoning that leads to \eref{eq:IrrC}, we have
	 \begin{equation}\label{eq:IrrC3}
	 2=\frac{2}{q(q+1)}\sum_{\overline{V}\in \overline{H}} |\tr(V)|^2\geq \frac{2q^2+aq}{q(q+1)},
	 \end{equation}
	 where the first equality follows from the observation that the two irreducible components of $\overline{H}$ are inequivalent, as proved in the appendix.
	 Since $a$ is odd, it follows that $a=1$ and  $C_{\overline{H}}(\overline{U})$ is a Singer unitary group. As in the proof of \thref{thm:MUBstabCov}, \eref{eq:IrrC3} can now be strengthened as
	 \begin{equation}\label{eq:IrrC4}
	2= \frac{2}{q(q+1)}\sum_{\overline{V}\in \overline{H}} |\tr(V)|^2\geq \frac{2q^2+q+2(q-2n)}{q(q+1)}.
	 \end{equation}
	 The inequality can never hold  when  $n>4$. Therefore, no stabilizer MUB is sharply covariant when $q$ is odd or $q>16$ even if we allow antiunitary transformations.
	
	 To complete the proof, it remains to consider stabilizer MUB in dimensions 8 and 16. The former is settled in the appendix. In dimension 16, each Singer unitary group has order 17, and its normalizer in the extended Clifford group has order 272. It follows that $\overline{G}$ is the normalizer of a Singer unitary group. Calculation shows that the normalizer  has two irreducible components, so  no stabilizer MUB in dimension  16 is sharply covariant.
\end{proof}

In summary we have introduced the concept of sharply covariant MUB, which are distinguished from generic group covariant MUB by smallest possible generating groups. Although there are a great variety of stabilizer MUB, we proved that
only two of them  are sharply covariant, and the conclusion remains intact even if antiunitary transformations are taken into account. Our study provides valuable insight
on  the symmetry  of stabilizer MUB, which may help understand a number of topics in quantum information and foundations, for example, those classicality-preserving unitary  transformations  with respect to discrete Wigner functions based on stabilizer MUB.
 Our work also  reveals a deep structure distinction between MUB and SICs, which is of intrinsic interest to  research on quantum geometry. Furthermore, the ideas and tools introduced in the course of study are useful to studying other problems related to MUB, SICs, discrete Wigner functions, and Clifford groups etc.

\section*{Acknowledgments}
The author is grateful to Nick Gill for introducing the concepts of Singer cycles and Zsigmondy primes and to Masahito Hayashi for comments. This work is supported in part by Perimeter Institute for Theoretical Physics. Research at Perimeter Institute is supported by the Government of Canada through Industry Canada and by the Province of Ontario through the Ministry of Research and Innovation.

\appendix
\section{Appendix A: Stabilizer MUB in dimension 8}
In this appendix we show that no stabilizer MUB in dimension 8 is sharply covariant with respect to any group composed of unitary and antiunitary operators. This case requires special treatment because $2^6-1$ has no Zsigmondy prime.
Calculation shows that there are 960 stabilizer MUBs, all of which can be transformed into each other under the Clifford group. The symmetry group of each MUB has order 96768, and its quotient $B$ over the HW group has order 1512. The group $B$ can be identified as the normalizer in $\Sp{6}{2}$ of an extension field type subgroup isomorphic to $\SL{2}{8}$ \cite{Kant12}, which has order 504 and index 3 in $B$. Each Sylow 3-subgroup $E$ of $B$ is an extraspecial group of order $27$ and exponent 9;  it has a unique order-9 subgroup $P$ that is not cyclic. It acts transitively on the nine bases of the MUB; the stabilizer of each basis under this action has order 3 and is always contained in $P$.

Suppose the   MUB  is sharply covariant with respect to $\overline{G}$; then $\overline{G}$ is irreducible according to \lref{lem:ICirr}.  Let $\overline{H}$ be the subgroup of $\overline{G}$ composed of unitary operators; then $\overline{H}$ is either identical with $\overline{G}$ or has index 2 and it is either irreducible or has two irreducible components of equal degree accordingly. Let $\overline{T}=\overline{H}\cap \phw$,  $Q=\overline{H}/\overline{T}$, and  $A$  any Sylow 3-subgroup of $Q$, then $A$ is an order-9 subgroup of $E$. If $A$ is not cyclic, then the stabilizer of each basis has order divisible by 3, so that $\overline{G}$ is not transitive on the nine bases of the MUB, in contradiction with the assumption. Otherwise,
$A$ is a Singer cyclic subgroup of $\Sp{6}{2}$, and $\overline{H}$ contains a Singer unitary subgroup, say $\overline{S}$.

Since  $\overline{H}$ has order either 72 or 36, while the normalizer of $\overline{S}$ in the Clifford group has order 54 according to \lref{lem:Singer},  $\overline{S}$ can not be normal in $\overline{H}$.
On the other hand,
 $|\tr(U)|^2=1$ for any nontrivial element $\overline{U}$ in a Singer unitary group.   So the sum of squared multiplicities of inequivalent irreducible components of $\overline{H}$  satisfies
\begin{equation}\label{eq:IrrC5}
\frac{1}{|\overline{H}|}\sum_{\overline{U}\in \overline{H}} |\tr(U)|^2> \frac{q^2+q}{|\overline{H}|}= \frac{|\overline{G}|}{|\overline{H}|}.
\end{equation}
This inequality cannot hold if $\overline{H}=\overline{G}$. If  $\overline{H}$  has index 2 in  $\overline{G}$ and thus two irreducible components, then the two irreducible components are necessarily inequivalent, given that $\overline{H}$ contains a Singer unitary, whose eigenvalues are nondegenerate. So the left hand side of \eref{eq:IrrC5} equals 2, and the inequality cannot hold either. It follows that no stabilizer MUB in dimension 8  is sharply covariant.

\section{Appendix B: A technical lemma}
\begin{lemma}\label{lem:PPDineq}
	Suppose  $\overline{H}$ is a subgroup of the Clifford group in dimension $q=2^n$ that contains a Zsigmondy unitary.
If $\overline{H}$ has two irreducible components of the same degree $2^{n-1}$, then the two irreducible components are inequivalent.
\end{lemma}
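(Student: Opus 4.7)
The plan is to assume for contradiction that the two irreducible $\overline{H}$-subrepresentations $V_1,V_2\subset\mathbb{C}^q$ are equivalent, and derive a contradiction from the strong trace constraint $|\tr(U)|^2=1$ provided by \lref{lem:PPD} for any Zsigmondy unitary $\overline{U}\in\overline{H}$.

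First, I would fix a unitary representative $U$ of $\overline{U}$ normalized so that $U^r=1$, where $r$ is the (prime) order of $\overline{U}$ modulo scalars. This is possible because $U^r$ is a priori a scalar of unit modulus, which can be absorbed by rescaling $U$ by an appropriate $r$-th root of unity. With this normalization, every eigenvalue of $U$ on $\mathbb{C}^q$ is an $r$-th root of unity $\omega_r^j$, $\omega_r:=\rme^{2\pi\rmi/r}$; denote the corresponding multiplicities by $m_j$, so that $\sum_j m_j=q=2^n$. Equivalent irreducible $\overline{H}$-subrepresentations are related by an intertwining unitary, so $U|_{V_1}$ and $U|_{V_2}$ are conjugate and in particular carry the same spectrum with multiplicities. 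Therefore each $m_j$ is even: writing $m_j=2k_j$ with $k_j\in\bbZ_{\geq0}$, one finds $\tr(U)=\sum_j m_j\omega_r^j=2z$ where $z:=\sum_j k_j\omega_r^j\in\bbZ[\omega_r]$.

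The final step is a cyclotomic integrality argument. From \lref{lem:PPD}, $|\tr(U)|^2=1$, hence $z\bar z=1/4$. But $z\bar z=\sum_{j,j'}k_jk_{j'}\omega_r^{j-j'}$ lies in $\bbZ[\omega_r]$, so it is an algebraic integer; being rational (equal to $1/4$), it must therefore lie in $\bbZ$, which contradicts $1/4\notin\bbZ$. Hence $V_1$ and $V_2$ cannot be equivalent.

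The main obstacle I anticipate is bookkeeping for the projective character of $\overline{H}$: to speak meaningfully of the spectrum of $U$ and of equivalence of submodules, one must work with genuine unitary representatives rather than classes modulo phase. Normalizing $U^r=1$ is precisely what trivializes this subtlety and brings the cyclotomic integers $\bbZ[\omega_r]$ into play, after which the algebraic-integer contradiction is immediate. Note also that the argument uses $p=2$ only through the parity assertion $m_j=2k_j$ in conjunction with the value $1/4$; in odd characteristic the analogous obstruction would fail, consistent with the role of this lemma only in dimension $q=2^n$.
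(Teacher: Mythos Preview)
Your argument is correct, and it is a genuinely different (and somewhat cleaner) route than the paper's. The paper first realizes the Zsigmondy unitary as a power of a Singer unitary, computes its spectrum explicitly to obtain $\tr(U)=-1$, and then reaches a contradiction via the minimal polynomial of a primitive $r$th root of unity: writing $2t_1+1=\sum_j c_j\zeta^j=0$ forces $f(x)=\sum_j c_j x^j$ to be an integer multiple of $1+x+\cdots+x^{r-1}$, which is impossible since $c_0$ is odd while the other $c_j$ are even. You bypass the explicit spectral computation entirely by invoking \lref{lem:PPD} directly to get $|\tr(U)|^2=1$, and then use the standard fact that a rational algebraic integer lies in $\bbZ$ to rule out $z\bar z=1/4$. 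Your version buys economy and reuses an already-proved lemma; the paper's version buys the sharper information $\tr(U)=-1$, though that extra precision is not needed for the lemma at hand.
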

\begin{proof}
	Let $\overline{U}\in \overline{H}$ be a Zsigmondy unitary of order $r$, then $U$ is a power of a Singer unitary, say $V$. Since all eigenvalues of $V$ are nondegenerate, with a suitable choice of the phase factor if necessary, we may assume that these eigenvalues are $\xi^j$ for $j=1,2,\ldots,q$ where $\xi$ is a primitive $(q+1)$th root of unity. Let $\zeta$ be a primitive $r$th root of unity, then  the eigenvalues of $U$ are $\zeta^j$ for $j=0,1,\ldots,r-1$, each of multiplicity $(q+1)/r$ except for the eigenvalue $1$ of multiplicity  $[(q+1)/r]-1$. It follows that $\tr(U)=-1$. Let $t_1$ and $t_2$ be the trace of $U$ within the two irreducible components of $\overline{H}$, respectively. Then both $t_1$ and $t_2$ are linear combinations of $r$th roots of unity with integer coefficients. If the two irreducible components of $\overline{H}$ are equivalent, then $t_1=t_2$. Consequently, $\tr(U)=t_1+t_2=2t_1=-1$, so that $2t_1+1=0$. Expressing  $t_1$ as a linear combination of powers of $\zeta$, we get $\sum_{j=0}^{r-1} c_j\zeta^j=0$, where $c_j$ are integers, all of which are even except for $c_0$. Therefore,
	$\zeta$ is a root of the polynomial $f(x)=\sum_{j=0}^{r-1} c_jx^j$, which implies that $f(x)$ is a multiple of the minimal polynomial $\sum_{j=0}^{r-1} x^j$
	of $\zeta$, which is impossible. This contradiction shows that the two  irreducible components of $\overline{H}$ are inequivalent.
\end{proof}

\bibliographystyle{apsrev4-1}

\bibliography{all_references}

\end{document}